\documentclass[11pt]{article}

\usepackage[utf8]{inputenc}
\usepackage[T1]{fontenc}

\usepackage[margin=1in]{geometry}

\usepackage{amsmath,amssymb,amsthm}

\usepackage{graphicx}
\usepackage{float}

\usepackage{booktabs}

\usepackage{algorithm}
\usepackage{algpseudocode}

\usepackage{natbib}
\usepackage[colorlinks=true,linkcolor=blue,citecolor=blue,urlcolor=blue]{hyperref}
\usepackage{url}

\usepackage{enumitem}
\setlist{nosep,leftmargin=*}
\theoremstyle{definition}
\newtheorem{defn}{Definition}
\newtheorem{prop}[defn]{Proposition}
\newtheorem*{remark}{Remark}

\title{Lean Atlas: An Integrated Proof Environment\\for Scalable Human-AI Collaborative Formalization}

\author{
  Banri Yanahama\textsuperscript{1}\thanks{\texttt{banri.yanahama@nyx.foundation}} \quad
  Akiyoshi Sannai\textsuperscript{2,1,3}\thanks{\texttt{sannai.akiyoshi.7z@kyoto-u.ac.jp}}
  \\[6pt]
  \textsuperscript{1}Nyx Foundation \quad
  \textsuperscript{2}Kyoto University\\[3pt]
  \textsuperscript{3}Large Language Model Research and Development Center, National Institute of Informatics
}

\date{}

\begin{document}

\maketitle
\vspace{-2em}

\begin{abstract}
\small\noindent
AI-driven autoformalization of mathematics is advancing rapidly. However, the type checker of a proof assistant guarantees only the logical correctness of proofs; it does not verify whether propositions and definitions faithfully capture their intended mathematical content. Consequently, AI-generated formal proofs can exhibit semantic hallucination---passing the type checker yet failing to express the intended mathematics. We propose a human-in-the-loop approach in which human scientists and AI collaboratively produce formal proofs, with humans responsible for the semantic verification of propositions and definitions. To realize this approach, we develop Lean Atlas, a Lean~4 tool that visualizes the dependency graph of a Lean~4 project as an interactive web viewer, enabling human scientists to grasp the overall structure of a formalization efficiently. Its core feature, Lean Compass, is an algorithm that, given a selected theorem set, automatically extracts the project-specific nodes whose semantic correctness can affect those target statements, thereby reducing the candidate set for semantic review in large-scale formalizations. We further define \emph{aligned Lean code} as formalization code that has undergone human semantic verification, and propose it as a quality standard for AI-generated formalizations. We evaluate the tool on six Lean~4 formalization projects with different structural characteristics; proof-heavy projects (PrimeNumberTheoremAnd, Carleson, Brownian Motion) achieved 94--99\% average node reduction, a 6-theorem milestone subset of FLT achieved 59.8\%, mixed PhysLib 69.0\%, and definition-heavy XMSS 27.3\%. Lean Atlas is available as \href{https://github.com/NyxFoundation/lean-atlas}{open-source software}.
\end{abstract}
\vspace{-1em}

\begin{figure}[H]
  \centering
  \includegraphics[width=0.75\textwidth,height=0.25\textheight,keepaspectratio]{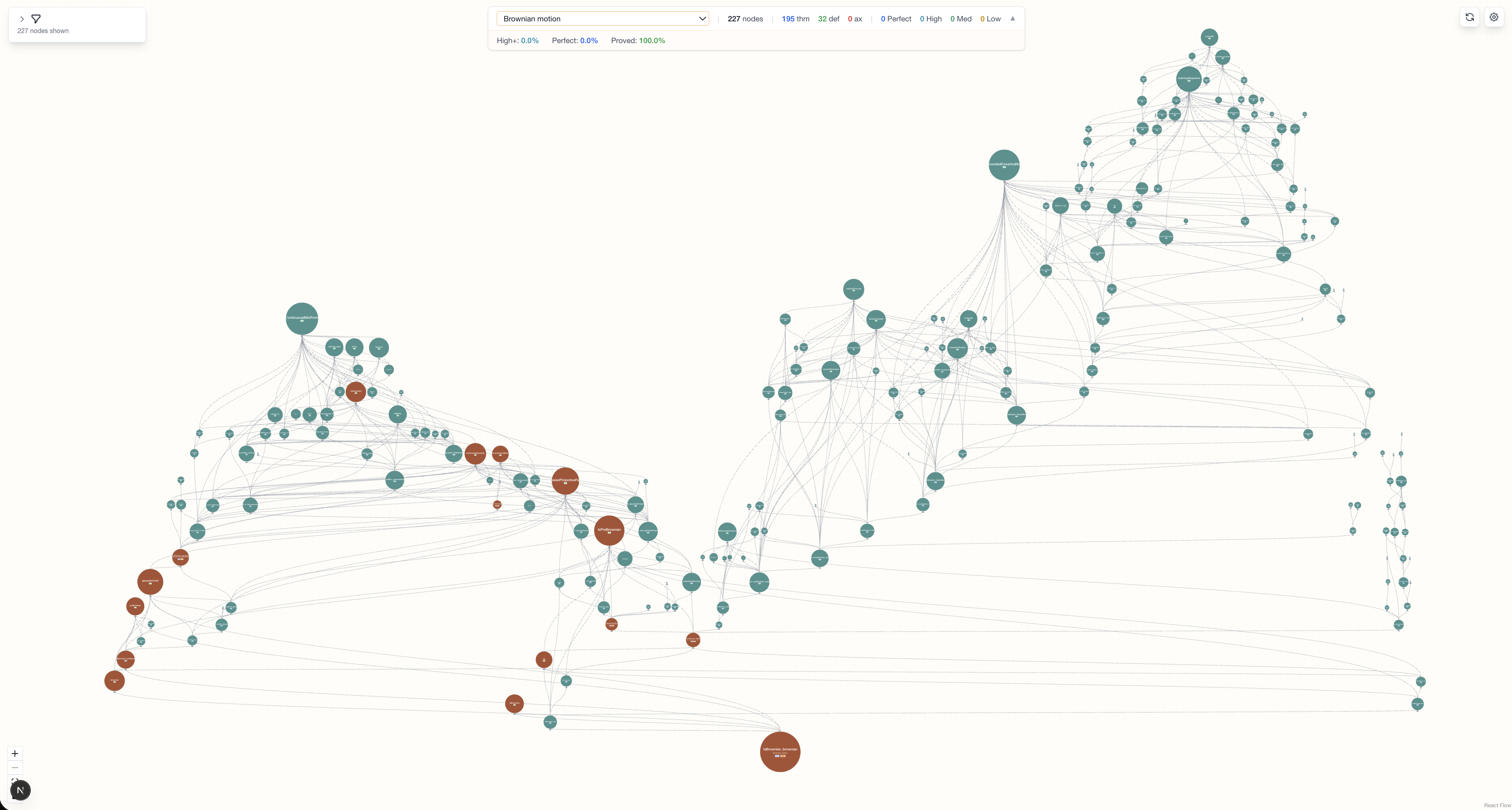}
  \vspace{-0.5em}
  \caption{\small Lean Atlas web viewer. Visualizing the review cone (227 nodes) of the main theorem \texttt{IsBrownian\_brownian} in the Brownian Motion project. The orange nodes (14) are the nodes automatically extracted by Lean Compass as targets for semantic verification (93.8\% reduction).}
  \label{fig:teaser}
\end{figure}

\section{Introduction}
\label{sec:introduction}

AI-driven autoformalization and theorem proving are producing formal proofs at increasing scale~\citep{Hubert2025,Ren2025,Wang2025,Lin2025,Chen2025,ByteDance2025,Baba2025,Varambally2025}, raising new quality assurance challenges~\citep[for a survey, see][]{Yang2024}. Yet a fundamental gap remains: the type checker guarantees logical correctness---that a proof term is correctly constructed for a given proposition---but it does not verify whether the proposition or its definitions represent the intended mathematical content~\citep{LeanDocs}.

This gap gives rise to what we call \textbf{semantic hallucination}: an AI-generated formalization passes the type checker yet does not match the meaning of the original mathematical statement. For instance, when formalizing ``$3/2 = 1.5$'', if the AI omits a type annotation, Lean's type inference interprets the expression as \texttt{Nat}, producing the proposition \texttt{3 / 2 = 1}---a statement that passes the type checker and can be proved, but does not match the original mathematics.

For human review to remain tractable, a formalization workflow needs to answer a concrete question: given a target theorem, which declarations can still affect whether that theorem says the right thing? \textbf{Lean Compass} addresses this question by starting from a selected theorem set and pruning dependencies that arise only from theorem proofs, while retaining dependencies that can still change statement meaning through types or definitions. The result is a smaller project-specific review set for human inspection. Throughout this paper, we treat the Lean standard library and Mathlib as a trusted base.

\begin{figure}[t]
  \centering
  \includegraphics[width=\textwidth]{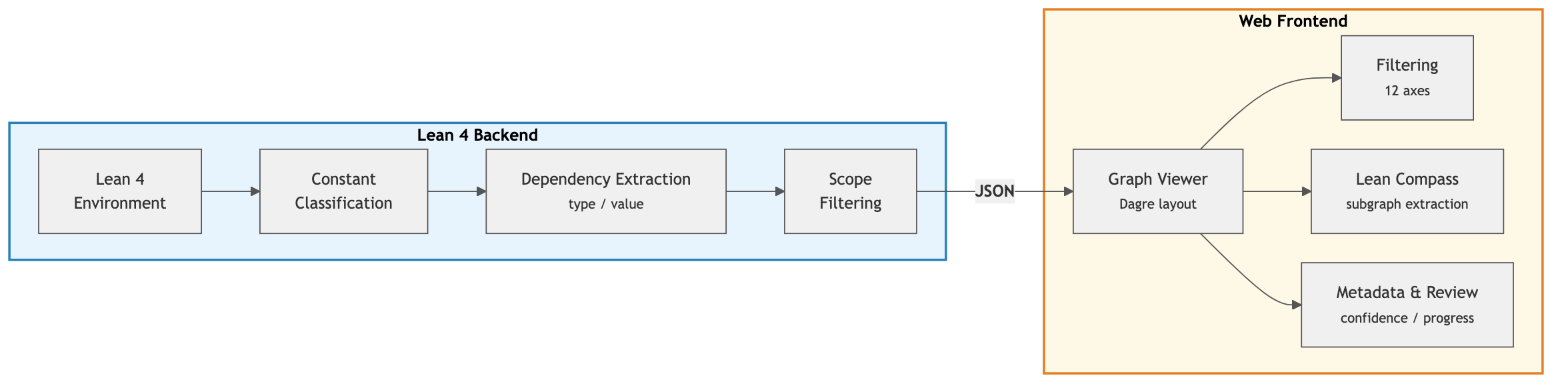}
  \caption{Architecture of Lean Atlas. The Lean~4 backend extracts and classifies the dependency graph, and the web frontend provides interactive visualization. Lean Compass automatically extracts, for a selected theorem set, the project-specific nodes whose semantic correctness can affect those target statements.}
  \label{fig:architecture}
\end{figure}

We place this idea in a \textbf{human-in-the-loop approach}: human scientists and AI collaboratively produce formal proofs, with humans verifying the semantic correctness of propositions and definitions. To realize this workflow, we develop \textbf{Lean Atlas} (Figure~\ref{fig:teaser}), a Lean~4 tool that classifies the dependency graph into type dependencies (proposition- and definition-level) and value dependencies (proof-level), visualizes the result as an interactive web viewer, and uses Lean Compass to focus semantic review on the declarations that remain relevant to the selected targets.

We call formalization code whose propositions and definitions have been semantically verified by a human scientist \textbf{aligned Lean code}---code that carries both logical correctness from the type checker and semantic correctness from human verification.

Although our approach is motivated by mathematical formalization, the semantic hallucination problem is not unique to mathematics. Formal verification with proof assistants is increasingly applied to cryptography, program correctness, theoretical physics, and other scientific domains. Indeed, \citet{ToobySmith2026} recently identified an error in a widely-cited theoretical physics paper through Lean formalization, demonstrating that semantic correctness concerns are not confined to mathematics. When AI provers assist formalization in these areas, the same gap between type-checker validity and intended semantics arises---and may be amplified by domain-specific definitions unfamiliar to proof-assistant experts. Lean Atlas and Lean Compass provide a domain-agnostic framework for narrowing down the nodes requiring human semantic review in any Lean~4 project; our evaluation includes a theoretical physics formalization (PhysLib) and a cryptographic formalization (XMSS Encoding Scheme) as such examples.

The contributions of this paper are threefold.

\begin{enumerate}
  \item We propose a human-in-the-loop approach to the semantic hallucination problem and introduce the concept of \textbf{aligned Lean code}.
  \item We develop \textbf{Lean Atlas}, a tool that classifies each edge of a Lean~4 project's dependency graph into 8 kinds along 3 axes (source kind $\times$ dependency site $\times$ target kind) and provides interactive visualization.
  \item We propose and implement \textbf{Lean Compass}, an algorithm that automatically extracts, for a target theorem set, the project-specific nodes whose semantic correctness can affect those target statements.
\end{enumerate}

\section{Related Work}
\label{sec:related}

Autoformalization---the automatic conversion of natural-language mathematics into formal proof code---has been actively studied since \citet{Wu2022}, who analyzed 150 failure cases and reported that definition mismatches are the most common failure mode. \citet{Li2024} introduced semantic consistency to address meaning breakdowns that symbolic equivalence overlooks. \citet{Liu2025} showed the limitations of type-checking-centric evaluation, \citet{Lu2025} demonstrated that logical validity and BLEU scores fail to detect semantic misalignment, and \citet{Poiroux2025} argued for semantic equivalence as the criterion for judging formalization correctness. We systematically define this problem as \textbf{semantic hallucination} and focus on prevention and scalable human verification workflows rather than detection.

In large-scale Lean formalization projects, leanblueprint~\citep{Massot2020} generates dependency graphs from \LaTeX{} documents for tracking formalization progress. LeanArchitect~\citep{Zhu2026} extracts blueprint metadata directly from Lean code, automating synchronization with \LaTeX{} and providing an interface for integrating AI provers into blueprint workflows. Lean Atlas shares dependency graph visualization with these tools but differs in purpose: it supports semantic verification through the distinction between type and value dependencies, and Lean Compass's automatic extraction of verification targets is a feature unique to Lean Atlas. From a human-AI collaboration perspective, Lean Copilot~\citep{Song2024} and LeanDojo~\citep{Yang2023} provide tactic-level proof assistance but do not address semantic correctness. Similarly, doc-gen4 and miniCTX~\citep{Hu2025} serve documentation and analysis purposes without interactive filtering for semantic verification.

\section{Background}
\label{sec:background}

\subsection{Semantic Hallucination}
\label{subsec:semantic-hallucination}

We assume familiarity with Lean~4~\citep{deMoura2021}. The type checker guarantees logical correctness of proofs but not whether propositions represent the intended mathematical content.

\begin{defn}[Semantic Hallucination]
\label{def:semantic-hallucination}
Given a natural-language mathematical statement $S$ and its formalization $F$, we say that $F$ exhibits semantic hallucination if (1)~$F$ passes the proof assistant's type checker, (2)~$F$ may even have a completed proof, but (3)~$F$ is not semantically equivalent to the mathematical content intended by~$S$.
\end{defn}

The major patterns of semantic hallucination reported in prior work are:

\begin{enumerate}
  \item \textbf{Definition mismatch}~\citep{Wu2022}.
  \item \textbf{Missing or extra assumptions}.
  \item \textbf{Goal substitution}.
  \item \textbf{Quantifier and scope errors}.
  \item \textbf{Type default semantics shift}.
\end{enumerate}

These patterns are not merely theoretical. \citet{Alexeev2025} reported numerous misformalizations encountered while formalizing Erd\H{o}s problems in Lean, including variable/boundary mix-ups (Pattern~4), missing implicit assumptions (Pattern~2), and high-level reformulations producing different problems (Pattern~3).

\subsection{Type Dependency and Value Dependency}

Each constant in Lean~4 has a type (for a theorem, the proposition; for a definition, the type signature) and a value (for a theorem, the proof term; for a definition, the implementation).

\begin{defn}[Type Dependency]
\label{def:type-dependency}
A type dependency from constant $A$ to constant $B$ exists when $B$ appears in the construction of $A$'s type.
\end{defn}

\begin{defn}[Value Dependency]
\label{def:value-dependency}
A value dependency from constant $A$ to constant $B$ exists when $B$ appears only in the construction of $A$'s value (and not in its type).
\end{defn}

A type dependency represents a proposition- or definition-level relationship that a human should verify. The semantic significance of a value dependency depends on the source kind: a value dependency from a theorem represents a proof-level dependency guaranteed by the type checker, whereas a value dependency from a definition can contain computational content not present in the type signature and should be retained for verification. This asymmetry forms the technical foundation of Lean Compass (Section~\ref{sec:lean-compass}).

\section{Lean Atlas}
\label{sec:lean-atlas}

Lean Atlas is an Integrated Proof Environment for supporting scalable semantic verification of AI-generated formalizations. It has a Lean~4 backend that extracts constants and their dependencies, classifies them into type/value dependencies, and exports them in JSON format (implemented as a CLI integrated with Lake), and a web frontend based on Next.js and React Flow for interactive visualization (Figure~\ref{fig:architecture}).

\textbf{Dependency graph extraction.} The backend traverses all constants in the Lean~4 environment. Each constant is classified as theorem, definition, inductive, structure, abbreviation, or axiom. For each constant, its type and value are recursively traversed; dependencies appearing in the type are type dependencies and those appearing only in the value are value dependencies. Each edge is further classified into 8 kinds along the 3 axes of source kind (theorem/definition) $\times$ dependency site (type/value) $\times$ target kind (theorem/definition). Only project-specific constants are included; the Lean standard library and Mathlib are excluded.

\textbf{Interactive web viewer.} The viewer provides 12 independent filtering axes (including kind, confidence, sorry status, and edge kind) combined by AND composition. Metadata (confidence in semantic correctness, proof progress, and definition progress) can be attached to each constant via Lean~4 custom attributes. When the user selects a main theorem, only its transitive dependencies are displayed in a hierarchical layout. Users can update confidence through the viewer, enabling team-based verification tracking.

\section{Lean Compass}
\label{sec:lean-compass}

\subsection{Motivation}

Large Lean~4 projects generate dependency graphs with thousands of nodes. As discussed in Section~\ref{sec:background}, the key insight is the asymmetry in value dependencies: those from a theorem's proof are guaranteed by the type checker and can be pruned, whereas those from a definition's implementation should be retained. Lean Compass exploits this asymmetry to automatically extract a subgraph containing the project-specific nodes whose semantic correctness can affect a target theorem set.

\subsection{Algorithm}

\textbf{Input.} Dependency graph $G = (V, E)$ where each node has a kind and each edge carries source kind, dependency site, and target kind information; target theorem node set $M \subseteq V$. The graph $G$ consists of project-specific nodes; the Lean standard library and Mathlib are treated as a trusted base.

\textbf{Declaration kind aggregation.} Lean Compass aggregates the 6 backend kinds into 3 categories: theorem remains theorem; inductive, structure, and abbreviation are treated as definition; axiom is always included in the output.

\textbf{8-kind edge classification.} Each edge is classified along source kind (theorem/definition) $\times$ dependency site (type/value) $\times$ target kind (theorem/definition), yielding 8 kinds (Table~\ref{tab:edge-kinds}).

\begin{table}[t]
  \centering
  \caption{8-kind edge classification and pruning rules. Edges where the source is a theorem and the site is value (\#3, \#4) are pruned.}
  \label{tab:edge-kinds}
  \small
  \begin{tabular}{clllll}
    \toprule
    \# & EdgeKind & Source & Site & Target & Pruned \\
    \midrule
    1 & \texttt{thm\_type\_to\_def}     & theorem    & type  & definition &            \\
    2 & \texttt{thm\_type\_to\_thm}        & theorem    & type  & theorem    &            \\
    3 & \texttt{thm\_value\_to\_def}    & theorem    & value & definition & \textbf{Pruned} \\
    4 & \texttt{thm\_value\_to\_thm}       & theorem    & value & theorem    & \textbf{Pruned} \\
    5 & \texttt{def\_type\_to\_def}  & definition & type  & definition &            \\
    6 & \texttt{def\_type\_to\_thm}     & definition & type  & theorem    &            \\
    7 & \texttt{def\_value\_to\_def} & definition & value & definition &            \\
    8 & \texttt{def\_value\_to\_thm}    & definition & value & theorem    &            \\
    \bottomrule
  \end{tabular}
\end{table}

\textbf{Pruning rule.} An edge $e = (u, v)$ is pruned when $u$ is a theorem and $e$'s site is value---i.e., \texttt{thm\_value\_to\_def} (\#3) and \texttt{thm\_value\_to\_thm} (\#4).

\textbf{Reachability computation.} On the pruned graph $G' = (V, E')$, the set of nodes reachable from $M$ is computed by BFS. Axioms are always included.

\textbf{Output.} The reachable node set $R \cup A$ (where $A$ = axiom nodes). Lean Compass is parameterized by the target theorem set $M$: the singleton case $M = \{m\}$ is used for per-theorem analysis, while a project-level run chooses $M$ as the selected main theorems of the project. The pseudocode is given in Algorithm~\ref{alg:lean-compass}.

\begin{algorithm}[t]
\caption{LeanCompass}
\label{alg:lean-compass}
\small
\begin{algorithmic}[1]
\Function{ShouldTraverse}{edge $e = (u, v)$}
    \If{$u.\text{kind} = \text{theorem}$ \textbf{and} $e.\text{site} = \text{value}$}
        \State \Return \textbf{false} \Comment{Prune: edge from theorem's proof}
    \EndIf
    \State \Return \textbf{true}
\EndFunction
\Statex
\Function{LeanCompass}{$G, M$}
    \State $R \gets \emptyset$ \Comment{Reachable nodes}
    \State $Q \gets M$ \Comment{BFS queue, initialized with target theorems}
    \While{$Q \neq \emptyset$}
        \State $u \gets Q.\text{dequeue}()$
        \If{$u \in R$} \textbf{continue} \EndIf
        \State $R \gets R \cup \{u\}$
        \ForAll{edge $e = (u, v) \in E$}
            \If{\Call{ShouldTraverse}{$e$} \textbf{and} $v \notin R$}
                \State $Q.\text{enqueue}(v)$
            \EndIf
        \EndFor
    \EndWhile
    \State $A \gets \{v \in V \mid v.\text{kind} = \text{axiom}\}$ \Comment{Always include axioms}
    \State \Return $R \cup A$
\EndFunction
\end{algorithmic}
\end{algorithm}

\subsection{Correctness Argument}

\begin{prop}[Soundness of Compass]
\label{prop:compass-soundness}
If each node in $R \cup A$ is semantically correct and the trusted base (Lean standard library and Mathlib) is semantically correct, then the proposition of each target theorem in $M$ correctly represents the intended mathematical content.
\end{prop}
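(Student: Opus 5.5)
The plan is an induction over the structure of the pruned graph $G'$, using a semantic model in which the meaning of a declaration is determined compositionally. I would first make precise the informal notion of ``semantically correct''. The meaning of a Lean expression is a function of the meanings of the constants occurring in it. For a theorem the only semantically relevant content is its type, because the proof term is already checked by the kernel. For a definition the relevant content is both its type and its value, since unfolding exposes the value to every statement that mentions it. An inductive, structure or abbreviation also contributes its full declaration. Under this reading, the meaning of the proposition $\mathrm{type}(m)$ for $m \in M$ depends only on the meanings of the constants occurring in $\mathrm{type}(m)$, together with any axioms it relies on.

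The key step is a closure lemma: $R$ is closed under the ``semantic support'' relation. By construction of Algorithm~\ref{alg:lean-compass}, if $u \in R$ then every $v$ with an unpruned edge $(u,v)$ is also in $R$. The unpruned edges are exactly the type edges out of theorems, together with all type and value edges out of definitions (kinds \#1, \#2, \#5--\#8 in Table~\ref{tab:edge-kinds}). These are precisely the constants appearing in the semantically relevant content of $u$ as described above. Constants outside the project belong to the trusted base and are correct by hypothesis. Axioms are in $A$ and are covered by the hypothesis as well.

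Next I would prove, by well-founded induction on declaration order, that for every $u \in R$ the semantic content of $u$ equals its intended content. Lean forbids cycles among declarations, so the dependency graph is acyclic and the induction is well founded. At the inductive step, the relevant content of $u$ is built from constants that are either in the trusted base, in $A$, or in $R$ by closure. All of these have their intended meaning: the first two by hypothesis, the last by the induction hypothesis. Node $u$ is itself assumed semantically correct relative to its dependencies. Compositionality then gives that $u$'s content is intended. Applying this to $m \in M \subseteq R$ yields the claim.

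The main obstacle is justifying the pruning of theorem-value edges, kinds \#3 and \#4. These edges are harmless only under proof irrelevance: any closed term of type $\mathrm{type}(m)$ establishes the same proposition, so no declaration reached only through a proof can change what $m$ says. The subtle cases are the following.
\begin{enumerate}
\item A definition used in a proof that also appears in $\mathrm{type}(m)$. This is already captured by the corresponding type edge.
\item An axiom smuggled into a proof. This is why $A$ is included unconditionally.
\item A theorem whose statement is referenced inside a definition's value, for example through subtype proofs. This is captured by kinds \#6 and \#8.
\end{enumerate}
I would argue these case by case, and state explicitly that ``semantically correct'' for a node is understood relative to the correctness of its retained dependencies, so that the argument does not become circular.
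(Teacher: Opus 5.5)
Your proposal is correct and takes essentially the paper's route: theorem-value edges (\#3, \#4) are safe to prune because a proof cannot change what its proposition says, while all type edges and definition-value edges are retained, so $R \cup A$ contains every project node that can affect a target statement. Your closure lemma, the induction over declaration order with ``semantically correct'' read relative to retained dependencies (under the absolute reading the claim is immediate from $M \subseteq R$), and the appeal to proof irrelevance make explicit what the sketch leaves implicit; the one fix needed is that the induction should treat a mutual inductive block as a single step, since its members may reference one another.
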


\begin{proof}[Proof sketch]
Pruning edges from a theorem's proof is safe because the type checker verifies proof terms; if a definition affects a theorem's proposition, it is separately captured as a type dependency. Value dependencies from definitions are retained because implementations can contain computational content beyond their type signatures. All type dependencies (\#1, \#2, \#5, \#6) are retained, so dependency relationships necessary for semantic correctness verification are never lost. By these points, all project-specific nodes that can affect the semantics of each target theorem's proposition in $M$ are included in $R \cup A$.
\end{proof}

\begin{remark}[Scope of Compass]
Since reachability on the pruned graph distributes over set union, running Lean Compass on a target set $M$ is equivalent to taking the union of the singleton results for each $m \in M$. Accordingly, the singleton case $M = \{m\}$ supports visualization and per-theorem analysis, while project-level use chooses $M$ as the selected main theorems of a project. If that selected theorem set exhausts the mathematical claims that the project intends to certify, then the theorem-level soundness guarantee above lifts to project-level claim coverage. Nodes pruned by Compass are unnecessary for verifying the semantic correctness of $M$, but Compass does not claim that those pruned nodes are themselves semantically correct. To verify pruned nodes, one should include them in $M$ and re-apply Compass.
\end{remark}

\section{Evaluation}
\label{sec:evaluation}

\subsection{Experimental Setup}

We selected six Lean~4 formalization projects with different structural characteristics: (1)~\textbf{PrimeNumberTheoremAnd}~\citep{Kontorovich2024}, a proof-heavy formalization of the prime number theorem (8 main theorems from the blueprint); (2)~\textbf{Carleson}~\citep{vanDoorn2023}, a proof-heavy collaborative formalization of Carleson's theorem on pointwise convergence of Fourier series (9 main theorems from the blueprint); (3)~\textbf{Brownian Motion}~\citep{Degenne2025}, a proof-heavy formalization of Brownian motion and stochastic calculus (5 main theorems from 6 annotated declarations, excluding 1 with a trivially small 3-node review cone); (4)~\textbf{FLT}~\citep{Buzzard2023}, a formalization of Fermat's Last Theorem with mixed structure, where we report 6 milestone theorems selected from 10 declarations annotated with \texttt{@[formalMeta mainTheorem]} and omit 1 axiom, 2 internal technical lemmas, and 1 auxiliary property from the averaged milestone statistics; (5)~\textbf{PhysLib}~\citep{ToobySmith2025}, a theoretical physics formalization library covering classical mechanics, quantum mechanics, quantum field theory, and general relativity (5 selected theorems from across the library); and (6)~\textbf{XMSS Encoding Scheme}~\citep{Khovratovich2025}, a definition-heavy formalization of hash-based post-quantum digital signature security bounds (5 main theorems), implemented by the authors. Unlike the first four projects, PhysLib and XMSS represent formal verification applied to theoretical physics and cryptography respectively, and are included to evaluate cross-domain applicability. The PrimeNumberTheoremAnd, Carleson, Brownian Motion, and FLT projects are also used in the evaluation of LeanArchitect~\citep{Zhu2026}.

For each project, we built a dependency graph of project-specific constants using Lean Atlas, applied the Lean Compass pruning rule, and measured node reduction rates. Aggregate values are recorded in \texttt{evaluation-data.md}.

\textbf{Metric.} For each main theorem $m \in M$: review cone $RC(m)$ is the set of project-specific nodes reachable from $m$ on the pre-pruning graph; node reduction rate $= 1 - (\text{nodes after Compass} / |RC(m)|)$. We use node reduction as a proxy for the size of the semantic review candidate set, not as a direct measure of human review time.

\subsection{Results}

Table~\ref{tab:results} shows node counts before and after applying Lean Compass. The main pattern is structural rather than project-specific: reduction is high when a review cone is theorem-dominated and lower when it is definition-dominated. This pattern appears both across projects and within individual projects.

\begin{table}[!t]
  \centering
  \caption{Lean Compass results. For each main theorem, the review cone (number of reachable nodes before pruning) and number of nodes after applying Compass are shown.}
  \label{tab:results}
  \footnotesize
  \begin{tabular}{lrrr}
    \toprule
    Project / Main Theorem ($m$) & Review Cone & After Compass & Reduction \\
    \midrule
    \textbf{PrimeNumberTheoremAnd} & & & \\
    \quad \texttt{Erdos392.Solution\_2} & 315 & 1 & 99.7\% \\
    \quad \texttt{MediumPNT} & 309 & 1 & 99.7\% \\
    \quad \texttt{WeakPNT} & 213 & 2 & 99.1\% \\
    \quad \texttt{WeakPNT\_AP} & 239 & 2 & 99.2\% \\
    \quad \texttt{lambda\_pnt} & 233 & 1 & 99.6\% \\
    \quad \texttt{mu\_pnt} & 231 & 1 & 99.6\% \\
    \quad \texttt{pi\_alt'} & 237 & 1 & 99.6\% \\
    \quad \texttt{prime\_between} & 245 & 1 & 99.6\% \\
    \midrule
    \textbf{Carleson} & & & \\
    \quad \texttt{classical\_carleson} & 1963 & 5 & 99.7\% \\
    \quad \texttt{control\_approximation\_effect} & 1944 & 10 & 99.5\% \\
    \quad \texttt{discrete\_carleson} & 1428 & 53 & 96.3\% \\
    \quad \texttt{exceptional\_set\_carleson} & 1959 & 5 & 99.7\% \\
    \quad \texttt{forest\_complement} & 885 & 105 & 88.1\% \\
    \quad \texttt{forest\_operator} & 1028 & 59 & 94.3\% \\
    \quad \texttt{forest\_union} & 1193 & 105 & 91.2\% \\
    \quad \texttt{metric\_carleson} & 1628 & 25 & 98.5\% \\
    \quad \texttt{two\_sided\_metric\_carleson} & 1783 & 25 & 98.6\% \\
    \midrule
    \textbf{Brownian Motion} & & & \\
    \quad \texttt{IsBrownian\_brownian} & 227 & 14 & 93.8\% \\
    \quad \texttt{IsPreBrownian.exists\_continuous\_modification} & 202 & 2 & 99.0\% \\
    \quad \texttt{IsPreBrownian.hasIndepIncrements} & 48 & 3 & 93.8\% \\
    \quad \texttt{IsPreBrownian.isAEKolmogorovProcess} & 56 & 2 & 96.4\% \\
    \quad \texttt{isProjectiveMeasureFamily\_gaussianProjectiveFamily} & 46 & 5 & 89.1\% \\
    \midrule
    \textbf{FLT} & & & \\
    \quad \texttt{FreyPackage.false} & 27 & 2 & 92.6\% \\
    \quad \texttt{FreyPackage.of\_not\_FermatLastTheorem} & 4 & 2 & 50.0\% \\
    \quad \texttt{ker\_RtoT\_le\_nilradical} & 17 & 8 & 52.9\% \\
    \quad \texttt{Mazur\_Frey} & 23 & 16 & 30.4\% \\
    \quad \texttt{Wiles\_Frey} & 25 & 16 & 36.0\% \\
    \quad \texttt{Wiles\_Taylor\_Wiles} & 31 & 1 & 96.8\% \\
    \midrule
    \textbf{PhysLib} & & & \\
    \quad \texttt{CanonicalEnsemble.fluctuation\_dissipation\_theorem\_finite} & 29 & 18 & 37.0\% \\
    \quad \texttt{ClassicalMechanics.euler\_lagrange\_varGradient} & 150 & 27 & 82.0\% \\
    \quad \texttt{FieldSpecification.wicks\_theorem} & 196 & 94 & 52.0\% \\
    \quad \texttt{QM.OneDimension.HarmonicOscillator.eigenfunction\_completeness} & 59 & 12 & 79.0\% \\
    \quad \texttt{lorentzAlgebra.exp\_mem\_restricted\_lorentzGroup} & 337 & 11 & 96.0\% \\
    \midrule
    \textbf{XMSS Encoding Scheme} & & & \\
    \quad \texttt{Constructions.TL1C.tl1c\_lemma5} & 25 & 18 & 28.0\% \\
    \quad \texttt{Constructions.TLFC.tlfc\_lemma4} & 33 & 22 & 33.3\% \\
    \quad \texttt{Constructions.TSL.tsl\_lemma6} & 45 & 22 & 51.1\% \\
    \quad \texttt{Constructions.random\_oracle\_composition} & 9 & 8 & 11.1\% \\
    \quad \texttt{LowerBound.cost\_lower\_bound} & 23 & 20 & 13.0\% \\
    \bottomrule
  \end{tabular}
\end{table}

In theorem-dominated cones, Compass removes most proof-level dependencies. PrimeNumberTheoremAnd shows the clearest case: all 8 targets reduce by 99.1--99.7\%. Carleson exhibits the same behavior at much larger scale, with 88.1--99.7\% reduction (average 96.2\%) even though its review cones range from 885 to 1963 nodes. Brownian Motion likewise remains highly reducible, with 89.1--99.0\% reduction (average 94.4\%). Figure~\ref{fig:compass-comparison} shows a representative Brownian Motion theorem, \texttt{IsBrownian\_brownian}, whose review cone shrinks from 227 nodes to 14 after pruning.

When computational definitions remain semantically relevant, the reduction drops. XMSS is the most definition-heavy benchmark and achieves only 11.1--51.1\% reduction (average 27.3\%), because chains of value dependencies from definitions must be retained. The same effect appears in lower-layer FLT milestones such as \texttt{Mazur\_Frey} (30.4\%) and \texttt{Wiles\_Frey} (36.0\%), and in PhysLib theorems such as \texttt{CanonicalEnsemble.fluctuation\_dissipation\_theorem\_finite} (37.0\%) and \texttt{FieldSpecification.wicks\_theorem} (52.0\%).

The mixed projects are particularly informative because they contain both extremes. FLT ranges from 30.4\% to 96.8\% reduction: upper-layer results such as \texttt{Wiles\_Taylor\_Wiles} (96.8\%) and \texttt{FreyPackage.false} (92.6\%) behave like theorem-heavy projects, whereas lower-layer constructions retain many definitions for Galois representations and modular forms. PhysLib ranges from 37.0\% to 96.0\%, with \texttt{lorentzAlgebra.exp\_mem\_restricted\_lorentzGroup} (96.0\%) matching the proof-heavy mathematical projects despite belonging to a theoretical physics library.

\begin{figure}[t]
  \centering
  \includegraphics[width=\textwidth]{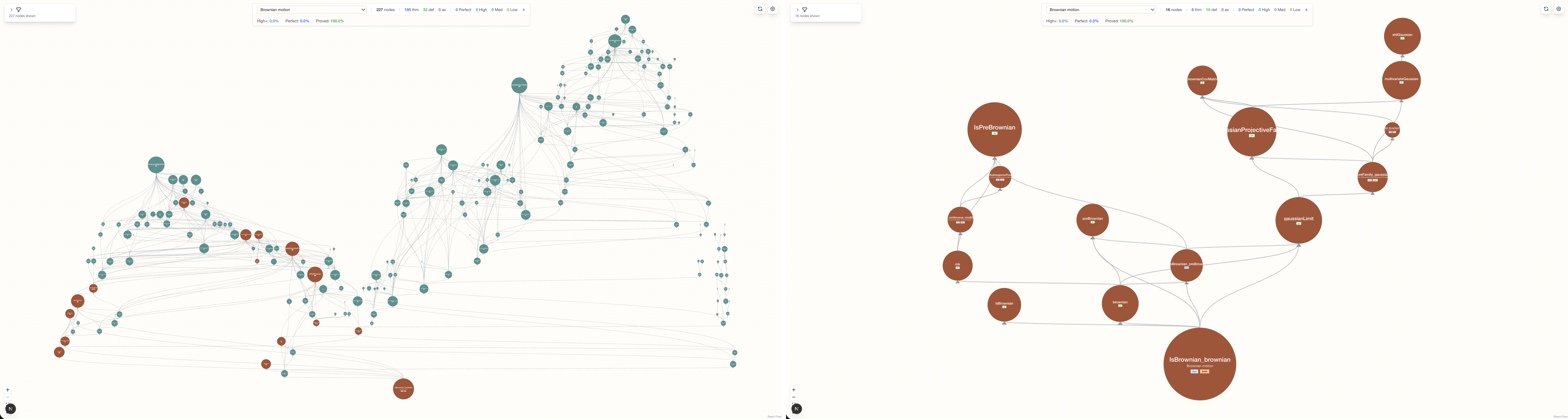}
  \caption{Dependency graph reduction by Lean Compass (Brownian Motion, main theorem \texttt{IsBrownian\_brownian}). (a)~Before: 227 nodes (review cone), (b)~After: 14 nodes (93.8\% reduction).}
  \label{fig:compass-comparison}
\end{figure}

Taken together, these cases indicate that the best predictor of reduction is the theorem/definition ratio inside a review cone, not the project label or even the overall project size. Carleson shows that cones near 2000 nodes can still be pruned aggressively when proof dependencies dominate, while FLT and PhysLib show that a single project can contain both highly reducible and weakly reducible targets. In theorem-dominated cones, value dependencies from proofs are the main removable mass; in definition-dominated cones, many dependencies remain in scope because they contribute directly to the semantics of the target statements.

\section{Conclusion}
\label{sec:conclusion}

We defined semantic hallucination in AI-generated formal proofs and proposed a human-in-the-loop approach realized by Lean Atlas, a Lean~4 tool that classifies dependency graph edges into 8 kinds and provides interactive visualization. Its core algorithm Lean Compass prunes value dependencies from theorem proofs to extract, for a target theorem set, the project-specific nodes whose semantic correctness can affect those target statements. Across six projects, we observed 99\% reduction for proof-heavy PrimeNumberTheoremAnd, 96.2\% for Carleson, 94.4\% for Brownian Motion, 69.0\% for mixed PhysLib, 59.8\% for the reported FLT milestone subset, and 27.3\% for definition-heavy XMSS, with the theorem/definition ratio in each review cone as the determining factor. The inclusion of PhysLib (theoretical physics) and XMSS (cryptography) demonstrates that the approach applies to formal verification in scientific domains beyond pure mathematics.

We introduced \textbf{aligned Lean code}---formalization code carrying both logical correctness from the type checker and semantic correctness from human verification---as a quality standard for AI-generated formalizations. When the selected main theorem set exhausts the mathematical claims that a project intends to certify, the theorem-level guarantee of Lean Compass lifts to project-level claim coverage. Lean Atlas and Lean Compass therefore serve as infrastructure for aligned-code workflows under explicit human semantic review.

The implementation is available at \url{https://github.com/NyxFoundation/lean-atlas}.

\subsection*{Limitations}

The current evaluation extends only to node reduction rates as a proxy for semantic review candidate-set size; actual reduction in human verification time has not been measured. Project-level claim coverage also depends on whether the selected main theorem set adequately captures the mathematical claims that the project intends to certify. Metadata such as confidence currently requires manual annotation, limiting scalability.

\subsection*{Future Work}

\begin{enumerate}
  \item \textbf{Accumulation of aligned Lean code.} Systematically accumulating aligned Lean code to build a formalization codebase free of semantic hallucination, valuable as training data for AI and as a reference for mathematicians.
  \item \textbf{Quantitative evaluation of human verification time.} Conducting comparative experiments with mathematicians to quantify the practical effectiveness of Lean Compass.
  \item \textbf{Integration with AI-generated confidence metadata.} Having AI automatically estimate confidence for each node when generating formalizations, enabling prioritization of human verification.
  \item \textbf{Application to broader scientific domains.} Extending evaluations to formal verification projects in theoretical physics, cryptography, program correctness, and other computational sciences to further validate domain-agnostic applicability.
\end{enumerate}



\end{document}